\documentclass[11pt,a4paper]{article}
\usepackage[english]{babel}
\usepackage{amsmath,amssymb,amsthm,epsfig,color,graphicx}
\usepackage[latin1]{inputenc}

\usepackage{tikz, tkz-euclide}
\usetikzlibrary{calc}
\usetikzlibrary{decorations.markings}

\usepackage{bbm}
\addtolength{\hoffset}{-1.5cm} \addtolength{\textwidth}{2cm}
\addtolength{\voffset}{-1.5cm} \addtolength{\textheight}{2cm}
\pagestyle{myheadings} \markboth{\textsc{\footnotesize{R.
Bissacot, E. O. Endo, A.C.D. van Enter}}} {\textsc{\footnotesize{{Stability of the Phase Transition of Critical-Field Ising Model on Cayley trees}}}}
  \newtheorem*{theorem*}        {Theorem}
	\newtheorem*{conjecture*}   {Conjecture}
  \newtheorem{theorem}           {Theorem}
  
  \newtheorem*{lemma*}          {Lemma}

  \newtheorem{proposition}      {Proposition}

%
%
\definecolor{Red}{cmyk}{0,1,1,0}

\definecolor{Blue}{cmyk}{1,1,0,0}

%
%
%
%

\newcommand{\ba}{\begin{array}}
\newcommand{\ea}{\end{array}}
\newcommand{\be}{\begin{equation}}
\newcommand{\ee}{\end{equation}}
\newcommand{\ben}{\begin{enumerate}}
\newcommand{\een}{\end{enumerate}}

%
%
%
%
  \def\d{\mathop{\textrm{\rm d}}\nolimits}                  
  \def\conv{\mathop{\textrm{\rm conv}}\nolimits}            
  \def\exp{\mathop{\textrm{\rm exp}}\nolimits}              
	\def\arctanh{\mathop{\textrm{\rm arctanh}}\nolimits}            

\let\x=\xi

%
%
%
%

%
%
%
%

\begin{document}

\title{Stability of the Phase Transition of Critical-Field Ising Model on Cayley trees under Inhomogeneous External Fields}
\author{
Rodrigo Bissacot\thanks{Partially supported by the Dutch stochastics cluster STAR (Stochastics - Theoretical and Applied
Research), also supported by FAPESP Grants 11/16265-8, 2016/08518-7 and CNPq Grants 486819/2013-2, 312112/2015-7.}\\
\footnotesize{\texttt{rodrigo.bissacot@gmail.com}}\\
\footnotesize{Institute of Mathematics and Statistics - IME USP - University of S\~ao Paulo}
\\[0.3cm]
Eric Ossami Endo\thanks{Supported by FAPESP Grants 14/10637-9 and 15/14434-8.} \\
\footnotesize{\texttt{eric@ime.usp.br}}\\
\footnotesize{Institute of Mathematics and Statistics - IME USP - University of S\~ao Paulo}\\
\footnotesize{Johann Bernoulli Institute for Mathematics and Computer Science - University of Groningen}\\
\\
Aernout C. D. van Enter\\
\footnotesize{\texttt{a.c.d.van.enter@rug.nl}}\\
\footnotesize{Department of Mathematics}\\
\footnotesize{Johann Bernoulli Institute for Mathematics and Computer Science - University of Groningen}
}
\maketitle

\begin{abstract}
We consider the ferromagnetic Ising model with spatially dependent external fields on a Cayley tree, and we investigate the conditions for 
the existence of the phase transition for a class of external fields, asymptotically approaching a homogeneous critical external field. Our results extend earlier results by Rozikov and Ganikhodjaev.
\end{abstract}

{\footnotesize{\bf Keywords:} Ising Model, Cayley Tree, Inhomogeneous External Fields, Critical Field, Phase Transition Stability}

{\footnotesize {\bf Mathematics Subject Classification (2000):} 82B20, 05C05, 82B26}

\section{Introduction}

The ferromagnetic Ising model on a Cayley tree 
has been extensively studied. Some early treatments which were mathematically rigorous can be found in papers by Katsura and Takizawa \cite{KT} and by Preston \cite{Preston}. As opposed to the situation on $\mathbb{Z}^d$, the phase transition on Cayley trees can appear even when the model has a non-zero homogeneous external field, because these trees are non-amenable, see \cite{JS}. 

Preston analysed the phase transition using Markov chains. More precisely, he proved that a large class of  Gibbs measures can be written as 
Markov chains, and the set of translational invariant Gibbs measures contains one, two or three completely homogeneous Markov chains.  For more details, see \cite{Ge} and \cite{Rozi1}.

The main approach to show the phase transition of the Ising model on a Cayley tree uses the fact that we can restrict ourselves to this well-behaved class of Gibbs measures, to which (among others) each extremal Gibbs measure belongs; this is the class of probability measures named ``splitting Gibbs measures" or ``Markov chains" (see \cite{Ge, Rozi1} and \cite{Ku}). The advantage to work with this class of probability measures is the fact that, on Cayley trees, we have a notion of compatibility (see the definition in the next section),  which implies that such a probability measure is a Markov chain. By this approach Preston showed that there exists a positive critical value for homogeneous external fields depending on the temperature and the order of the Cayley tree $\Gamma^d$, indicated by $h_c(\beta, d)$, and a critical temperature, indicated by $\beta_c(d)$, such that, if $\beta\le \beta_c(d)$ or $|h|>h_c(\beta,d)$, there is no phase transition, and for $\beta>\beta_c(d)$ and $|h|\le h_c(\beta,d)$, the model undergoes a phase transition. Moreover, when $|h|=h_c(\beta,d)$, there exist exactly two homogeneous splitting Gibbs measures, while for $|h|<h_c(\beta,d)$ there exist exactly three homogeneous splitting Gibbs measures.

Technically, the advantage is that the study of Gibbs measures reduces to the study of a set of recurrence equations on ``boundary laws'' (which we sometimes also call ``boundary fields''). This reduction actually does not need any translation invariance; for some earlier work where non-translation-invariant versions play a role, see e.g. \cite{Rozi2}, and \cite{EEIK} and in particular the discussion in its Appendix.

When we assume inhomogeneous external fields, we want to consider perturbations of this critical field, with the perturbations decaying to zero when the distance to the root approaches infinity.

We consider this regime, as it is the most sensitive, and presents the closest equivalent of the question considered in \cite{BC,BCCP}. We want to look how fast the external fields can decay to the critical value in order to still be able to see a phase transition. 

Moreover, we remark that spin systems in decaying fields have been studied to model systems in traps, see e.g. \cite{CV1}.

On the lattice $\mathbb{Z}^d$, we know by the Peierls argument that in zero field there is a transition for 
$d\ge 2$. On the other hand, Lee and Yang \cite{LY, Rue} showed that the Ising model with non-zero homogeneous external fields has uniqueness for any temperature. Thus, the critical value in this case is $h_c=0$. 

When the external fields are depending on the site $h_i$ with $i\in \mathbb{Z}^d$ and they are decaying to zero, we have the following results. Bissacot and Cioletti \cite{BC} showed that the ferromagnetic Ising model on the lattice $\mathbb{Z}^d$ with external fields $(h_i)_{i\in \mathbb{Z}^d}$ undergoes a phase transition if the external fields $h_i$ are summable, i.e., $\sum_{i\in \mathbb{Z}^d}h_i<+\infty$. Moreover, Bissacot, Cassandro, Cioletti and Presutti \cite{BCCP} showed that, if the external fields are of the form $h_i=\lVert i\rVert^{-\alpha}$, then the ferromagnetic Ising model on the lattice $\mathbb{Z}^d$ with external fields $(h_i)_{i\in \mathbb{Z}^d}$ undergoes a phase transition when $\alpha>1$, and there is uniqueness at $0<\alpha<1$ for a small temperature. Afterwards, Cioletti and Vila \cite{CV} extended the uniqueness for all temperatures when $0<\alpha<1$. For $\alpha=1$ there are partial results, see \cite{BCCP}.

The purpose of this paper is to look at the ferromagnetic Ising model on the Cayley tree $\Gamma^d$ with spatially dependent external fields. As in \cite{BCCP}, the external fields are decaying to the critical value, which here is non-zero, $h_c=h_c(\beta,d)$. We will show that, if the external fields are of the form $h_n=-h_c-\epsilon_n$, with $\epsilon_n$ positive, decreasing, decaying to zero and satisfying the following  condition,
\begin{equation}\label{sum1}
\lim_{n\to \infty}\sum_{j=1}^n\left( \sum_{i=j}^n \epsilon_i \right)^2<\infty,
\end{equation}
then the model undergoes a phase transition at low temperature. 
On the other hand, when the condition is violated, and the sum diverges, we will obtain uniqueness of the Gibbs measure for the perturbed model.
 
Note that the above condition is substantially weaker than the one from \cite{BC}. In fact, the condition that we would get from the arguments of  \cite{BC} is
\begin{equation}\label{sum2}
\sum_{n\ge 1}d^n\epsilon_n<+\infty.
\end{equation}
It is easy to see that every sequence $(\epsilon_n)_{n\ge 1}$ satisfying (\ref{sum2}) also satisfies (\ref{sum1}).
We will say that the phase transition persists if the maximal measure $\mu^{+}$ is different from the minimal measure $\mu^{-}$.
We will consider in particular the persistence of the plus state (that is the positively magnetised state) in a negative critical field. We notice that this state is unstable (disappears) if we add a homogeneous negative external field; we ask the question what happens once we add a negative field decaying to zero. If the decay is fast the transition persists, if the decay is slow enough it may disappear. The condition mentioned above indicates the threshold between those two behaviours. 

Some other aspects of the Ising model in a critical field have been studied by Bleher et al \cite{BRSSZ}. The stability for decaying interactions of the Ising model in non-critical fields, and also for summable inhomogeneous fields as in \cite{BC}, has been considered by Ganikhodjaev \cite{Gan}.

\section{Definitions and notation}

Let $\Gamma^d=(V,L)$ be the Cayley tree of order $d$, i.e., a $d+1$-regular infinite tree.
For a fixed $x_0 \in V$, called the root, define generation $n$ by $W_n=\{x\in V:\ \d(x,x_0)=n\}$ and $V_n=\cup_{k=1}^n W_k$. For each $x\in W_n$, denote $S(x)=\{y\in W_{n+1}:\ \d(x,y)=1\}$
for the set of  children of $x$. Define also $L_n$ to be the edges of the subtree of $\Gamma^d$ restricted to the vertices $V_n$. For each $x\in V$, we denote the distance of $x$ from the root $x_0$ by $\lVert x\rVert=\d(x,x_0)$.

Consider the set $\Omega=\{-1,1\}^V$ of configurations on $V$. For each $\Lambda\subset V$, we define $\Omega_{\Lambda}=\{-1,1\}^{\Lambda}$ to be the set of configurations on $\Lambda$. We denote by $\sigma\in \Omega$ the configurations in $\Omega$, and by $\sigma_{\Lambda}\in \Omega_{\Lambda}$ the configurations in $\Omega_{\Lambda}$ (if the set is well-known and there is no chance of confusion, we will sometimes write $\sigma$ instead of $\sigma_{\Lambda}$.)

Let $\sigma \in \Omega$ and $J>0$. The \emph{Hamiltonian of the ferromagnetic Ising model on the volume $V_n$} is the following,
\begin{equation}
H^{0}_{n}(\sigma)=
-J\displaystyle\sum_{\langle x, y\rangle \in L_n}\sigma_x\sigma_y.
\end{equation}
where $\langle x,y\rangle$ is summed over  a set of  pairs of nearest-neighbor vertices.  
The \emph{Hamiltonian of the ferromagnetic Ising model with boundary condition $\eta$ on the volume $V_n$} is the following,
\begin{equation}
H^{\eta}_{n}(\sigma)=
-J\displaystyle\sum_{\langle x, y\rangle \in L_n}\sigma_x\sigma_y-J\displaystyle\sum_{\substack{\langle x, y\rangle \in L_{n+1}\\ x\in W_n\\ y\in W_{n+1}}}\sigma_x\eta_y.
\end{equation}
When $\eta_x=1$ for all $x\in V$, we say that $\eta$ is the \emph{plus boundary condition}, while $\eta_x=-1$ for all $x\in V$ is the \emph{minus boundary condition}, and $\eta_x=0$ for all $x\in V$ is the \emph{free boundary condition}.

Let $\bar{h}=(h_n)_{n\ge 1}$ be a real-valued sequence. We define the \emph{Hamiltonian of the ferromagnetic Ising model on the volume $V_n$ with boundary condition $\eta$ and with spatially dependent external fields $\bar{h}$} by
\begin{equation}\label{hamiltonian}
H^{\eta}_{n,\bar{h}}(\sigma)=
H^{\eta}_{n}(\sigma)  - \sum_{k=1}^n\sum_{x\in W_k}h_k \sigma_x.
\end{equation}

Given the \emph{inverse temperature} $\beta>0$, we define the \emph{Gibbs measure  on the volume $V_n$ with boundary condition $\eta$ and with spatially dependent external fields $\bar{h}$} by
\begin{equation}
\mu_{n,\beta,\bar{h}}^{\eta}(\sigma)=
\frac{\exp\{-\beta H^{\eta}_{n,\bar{h}}(\sigma)\}}{Z^{\eta}_{n,\beta,\bar{h}}},
\end{equation}
where $Z^{\eta}_{n,\beta,\bar{h}}$ is the partition function given by
\begin{equation}
Z^{\eta}_{n,\beta,\bar{h}}=\sum_{\sigma\in \Omega_{V_n}}\exp\{-\beta H^{\eta}_{n,\bar{h}}(\sigma)\}.
\end{equation}
We will write $\mu_{n,\beta,\bar{h}}^{+}$ when the boundary condition is plus, and $\mu_{n,\beta,\bar{h}}^{-}$ when it is minus. Define the set of Gibbs measures as the convex hull of the set of all weak limits of the Gibbs measures on the volumes $V_n$ with boundary conditions $\eta_n$ and with spatially dependent fields $\bar{h}$,
\begin{equation}
\mathcal{G}_{\beta}=\conv\left\{ \mu: \lim_{n\to +\infty}\mu_{m_n,\beta,\bar{h}}^{\eta_{m_n}}=\mu; \text{ for all increasing } (m_n)_{n\ge 1}, \text{ and } (\eta_n)_{n\ge 1} \right\}
\end{equation}
We say that the model undergoes a \emph{phase transition} if there exists $\beta$ such that $|\mathcal{G}_{\beta}|>1$. This definition is equivalent to prove that, for the same $\beta$, we have 
$\mu_{\beta,\bar{h}}^+\neq \mu_{\beta,\bar{h}}^-$, where $\mu_{\beta,\bar{h}}^{\pm}=\lim_{n\to +\infty} \mu_{n,\beta,\bar{h}}^{\pm}$ (One can find the proof of this result in \cite{Ge}). We say that the model \emph{has uniqueness at inverse temperature $\beta$} if $|\mathcal{G}_{\beta}|=1$.

We say that a Gibbs measure $\mu$ is \emph{extremal} if $\mu$ cannot be decomposed as a convex combination of other Gibbs measures. We know that $\mu^+_{\beta,\bar{h}}$ and $\mu^{-}_{\beta,\bar{h}}$ are extremal Gibbs measures (see \cite{Ge}).

\subsection{Splitting Gibbs Measures}

Let $h_n \in \mathbb{R}$ for $n\ge 1$ and $b_x\in \mathbb{R}$ for each $x\in V$.
The  $b_x$ will be the \emph{boundary fields}, and often we will write $b_n$, in the situation where the boundary fields only depend on the generation $n$. We define the probability measure $\mu_n$ on the volume $V_n$ as
\begin{equation}\label{gibbs_free}
\mu_{n}(\sigma)=\frac{1}{Z_{n}}\exp\left\{ -\beta H^{0}_{n}(\sigma)+ \sum_{k=1}^{n-1}\sum_{x\in W_{k}}h_k\sigma_x +\sum_{x\in W_n}b_x\sigma_x \right\},
\end{equation}
where $\beta$ is the inverse temperature, and $Z_{n}$ is the partition function given by
\begin{equation}
Z_{n}=\sum_{\sigma_{V_n}\in \Omega_{V_n}}\exp\left\{ -\beta H^{0}_{n}(\sigma) +\sum_{k=1}^{n-1}\sum_{x\in W_{k}}h_k\sigma_x +\sum_{x\in W_n}b_x\sigma_x. \right\}.
\end{equation}

We say that the probability measures $\mu_{n}$ are \emph{compatible} if for all $n\ge 1$ and $\sigma\in \Omega_{V_{n-1}}$:
\begin{equation}\label{compatibility}
\sum_{\omega\in \Omega_{W_n}}\mu_n(\sigma \lor \omega) = \mu_{n-1}(\sigma).
\end{equation}
Here $\sigma \lor \omega$ is the concatenation of the configurations. By Kolmogorov's Theorem, there exists a unique measure $\mu$ on $\Omega$ such that, for all $n$ and $\sigma_{V_n}\in \Omega_{V_n}$,
\begin{equation}
\mu(\sigma|_{V_n}=\sigma_{V_n})=\mu_{n}(\sigma_{V_n}).
\end{equation}
This measure is called a \emph{splitting Gibbs measure} (For Kolmogorov's Theorem, see \cite{Shiryaev}).

By \cite{Ge}, we know that the Gibbs measures with boundary condition plus and minus are extremal, and all extremal Gibbs measures are splitting Gibbs measures. The free-boundary Gibbs measure $\mu^{\sharp}$ at sufficiently low temperatures is not extremal, but still a splitting Gibbs measure. Taking an arbitrary  convex mixture of those three states in general will give a non-splitting measure, however. \footnote{There is an active interest under which conditions the free-boundary measure, and its analogues in external fields, for Potts models etc, are extremal. This is also known as the ``reconstruction problem''. See e.g. \cite{Moss,EKPS,FK}. However, for the present work this question plays no role.}

\section{Compatibility}

From now on, we will assume that $b_x$ for all $x\in V$ is such that $b_{x}=b(\lVert x\rVert)$, i.e., $b_x$ depends only on the distance of $x$ from the root. Writing $b_n=b_x$ when $\lVert x \rVert = n$, we will usually also assume that $b_n$ is decreasing. 
The following result is adapted from Rozikov (\cite{Rozi1} Theorem 2.1), who treated general inhomogeneous fields.

\begin{theorem*}\label{thm_compatibility}
The probability measures $(\mu_n)_{n\ge 1}$ are compatible if and only if for every $n\ge 2$ the following equation holds,
\begin{equation}\label{recurrence}
b_{n-1}=h_{n-1}+d F(b_n,\theta),
\end{equation}  
where $\theta=\tanh(\beta J)$ and $F(x,\theta)=\arctanh(\theta \tanh x)$.
\end{theorem*}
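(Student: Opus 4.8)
The plan is to prove the equivalence by the standard Cayley-tree recursion argument, exploiting the tree structure which makes the partition function factorize over the subtrees rooted at the children of $x_0$. First I would fix $n \ge 2$ and compute, for a configuration $\sigma \in \Omega_{V_{n-1}}$, the marginal $\sum_{\omega \in \Omega_{W_n}} \mu_n(\sigma \lor \omega)$. Because $H^0_n$ only couples $x \in W_{n-1}$ to its children $y \in S(x)$ and the external-field and boundary-field terms are one-body, the sum over $\omega \in \Omega_{W_n}$ splits into a product over the vertices $x \in W_{n-1}$, and for each such $x$ over its children $y \in S(x)$; each single-child sum has the form $\sum_{\omega_y = \pm 1} \exp\{\beta J \sigma_x \omega_y + b_n \omega_y\} = 2\cosh(\beta J \sigma_x + b_n)$. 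I would also recompute $\mu_{n-1}(\sigma)$ directly from its definition, noting that it carries the field term $h_{n-1} \sigma_x$ for $x \in W_{n-1}$ and the boundary term $b_{n-1}\sigma_x$, whereas $\mu_n$ carries $h_{n-1}\sigma_x$ on $W_{n-1}$ but $b_n$ only on $W_n$.

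Next I would take the ratio of the two sides of the desired identity (\ref{compatibility}) evaluated at a configuration $\sigma$ and at the configuration $\sigma^x$ obtained by flipping the spin at a single $x \in W_{n-1}$. All the common factors (the bulk coupling terms inside $V_{n-2}$, the partition functions, the fields on generations $< n-1$) cancel in this ratio, leaving on the left-hand side a factor $\exp\{2 b_{n-1}\sigma_x\}$ (from the flipped boundary field, times the $h_{n-1}$ contribution) and on the right-hand side $\exp\{2 h_{n-1}\sigma_x\}$ times $\prod_{y \in S(x)} \cosh(\beta J \sigma_x + b_n)/\cosh(-\beta J \sigma_x + b_n)$, which is $\big(\cosh(\beta J + b_n)/\cosh(\beta J - b_n)\big)^{d\,\sigma_x}$ since $|S(x)| = d$. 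Taking logarithms and dividing by $2\sigma_x$ yields $b_{n-1} = h_{n-1} + \tfrac{d}{2}\log\frac{\cosh(\beta J + b_n)}{\cosh(\beta J - b_n)}$, and a direct computation (using the addition formula for $\cosh$) identifies $\tfrac12 \log\frac{\cosh(\beta J + b_n)}{\cosh(\beta J - b_n)}$ with $\arctanh(\tanh(\beta J)\tanh b_n) = F(b_n, \theta)$, which is exactly (\ref{recurrence}). For the converse, I would observe that compatibility on configurations differing in only one site at generation $n-1$, together with the correct normalization, is enough to force (\ref{compatibility}) for all $\sigma$, since any two configurations on $V_{n-1}$ are connected by single-site flips at $W_{n-1}$ while configurations on $V_{n-2}$ are untouched; running the computation backwards shows (\ref{recurrence}) implies the marginal identity.

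The one point requiring a little care — and the main obstacle — is the bookkeeping of which field lives on which generation: the measure $\mu_n$ in (\ref{gibbs_free}) puts the inhomogeneous field $h_k$ on $W_k$ for $k \le n-1$ and the boundary field $b_x$ on $W_n$, so when passing from $\mu_n$ to $\mu_{n-1}$ the top generation $W_n$ is summed out and the field on $W_{n-1}$ must change from $b_{n-1}$ (in $\mu_{n-1}$) to the ``effective'' field generated by integrating out the children, $h_{n-1} + dF(b_n,\theta)$, which is why the recursion relates $b_{n-1}$ to $b_n$ through $h_{n-1}$ rather than $h_n$. Once this indexing is pinned down, everything else is the routine hyperbolic-identity calculation; I would present the single-flip ratio computation in full and relegate the $\cosh$-to-$\arctanh$ identity to a one-line verification.
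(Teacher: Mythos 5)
Your proof is correct and follows essentially the same route as the paper: marginalize the top generation, factor the sum over $W_n$ into per-child factors $2\cosh(\beta J \sigma_x + b_n)$, eliminate the normalization by comparing two configurations (the paper divides the all-plus by the all-minus configuration where you use a single spin flip, a cosmetic difference), and finish with the identity $\tfrac{1}{2}\log\frac{\cosh(\beta J + b_n)}{\cosh(\beta J - b_n)}=\arctanh(\theta\tanh b_n)$; your converse is likewise the paper's computation, phrased through matching ratios rather than exhibiting the explicit constant $A_n=Z_n/Z_{n-1}$. One minor slip of labeling only: you attribute $b_{n-1}$ to the left-hand side of (\ref{compatibility}) and $h_{n-1}$ plus the $\cosh$ factors to the right-hand side, whereas it is the marginal of $\mu_n$ (the left side) that carries $h_{n-1}$ and the $\cosh$ factors and $\mu_{n-1}$ that carries $b_{n-1}$; the recursion you extract is nevertheless the correct one.
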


\begin{proof}
Suppose that (\ref{compatibility}) holds. Substituting in the probability measure (\ref{gibbs_free}), we have
$$
\begin{aligned}
&\frac{Z_{n-1}}{Z_n}\sum_{\omega\in \Omega_{W_n}}\exp\left\{ -\beta H^0_{n-1}(\sigma) + \beta J \sum_{x\in W_{n-1}}\sum_{y\in S(x)}\sigma_x\omega_y + \sum_{k=1}^{n-1}\sum_{x\in W_{k}}h_k\sigma_x+b_n\sum_{x\in W_n}\omega_x \right\}\\
&=\exp\left\{ -\beta H^0_{n-1}(\sigma)+ \sum_{k=1}^{n-2}\sum_{x\in W_{k}}h_k\sigma_x +b_{n-1}\sum_{x\in W_{n-1}}\sigma_x \right\}.
\end{aligned}
$$ 
This implies the following equation,
$$
\frac{Z_{n-1}}{Z_n}\sum_{\omega \in \Omega_{W_n}}\exp\left\{\sum_{x\in W_{n-1}}\sum_{y\in S(x)}(\beta J \sigma_x\omega_y+b_n \omega_y) \right\}
=\exp\left\{(b_{n-1}-h_{n-1})\sum_{x\in W_{n-1}}\sigma_x \right\}.
$$
Thus,
$$
\frac{Z_{n-1}}{Z_n}\sum_{\omega \in \Omega_{W_n}}\prod_{x\in W_{n-1}}\prod_{y\in S(x)}\exp\{\beta J \sigma_x\omega_y + b_n\omega_y\}
=\prod_{x\in W_{n-1}}\exp\{(b_{n-1}-h_{n-1})\sigma_x\},
$$
and therefore,
$$
\frac{Z_{n-1}}{Z_n}\prod_{x\in W_{n-1}}\left(\sum_{u\in \{-1,1\}}\exp\{\beta J \sigma_x u + b_n u\}\right)^{d}
=\prod_{x\in W_{n-1}}\exp\{(b_{n-1}-h_{n-1})\sigma_x\}.
$$

Substituting in the last equality  $\sigma_x=1$ and $\sigma_x=-1$ for all $x\in V$, and dividing the first expression by the second one, we obtain
$$
\left( \frac{\sum_{u\in \{-1,1\}}\exp\{\beta J u+b_n u\}}{\sum_{u\in \{-1,1\}}\exp\{-\beta J u+b_n u\}} \right)^{d}=\exp\{2 (b_{n-1}-h_{n-1})\}.
$$
Taking logarithms, we get the desired formula. Note that
$$
F(x,\theta)=\arctanh(\theta \tanh x)=\frac{1}{2}\log\frac{(1+\theta)e^{2x}+(1-\theta)}{(1-\theta)e^{2x}+(1+\theta)}.
$$
For the converse, note that
$$
\begin{aligned}
&\sum_{\omega\in \Omega_{W_n}}\mu_n(\sigma \lor \omega)\\
&=\frac{1}{Z_{n}}\exp\left\{ -\beta H^0_{n-1}(\sigma)+\sum_{k=1}^{n-1}\sum_{x\in W_k}h_k\sigma_x  \right\}\prod_{x\in W_{n-1}}\left( \sum_{u\in \{-1,1\}}\exp\left\{  \beta J \sigma_x u + b_n u  \right\} \right)^{d}.
\end{aligned}
$$
For any $t\in \{-1,1\}$, we have the following identity,
$$
\left( \sum_{u\in \{-1,1\}}\exp\left\{ \beta J t u + b_n u  \right\} \right)^{d}=a_n\exp\{t (b_{n-1}-h_{n-1})\},
$$
for some $a_n>0$. Consider the function $A_n=a_n^{s}$ where $s=|W_{n-1}|$. We have
$$
\begin{aligned}
&\sum_{\omega \in \Omega_{W_n}}\mu_n(\sigma \lor \omega)\\
&=\frac{A_n}{Z_{n}}\exp\left\{ -\beta H^0_{n-1}(\sigma)+\sum_{k=1}^{n-1}\sum_{x\in W_k}h_k\sigma_x \right\} \prod_{x\in W_{n-1}}\exp\{\sigma_x (b_{n-1}-h_{n-1})\}
\end{aligned}
$$
Using the fact that $\sum_{\sigma \in \Omega_{V_{n-1}}}\sum_{\omega\in \Omega_{W_n}}\mu_n(\sigma \lor \omega)=1$, we have $A_n=Z_n/Z_{n-1}$.
\end{proof}

By Theorem \ref{thm_compatibility}, there is a bijection between sets $\mathbf{b}=\{b_n,\ n\ge 1\}$ satisfying  equation (\ref{recurrence}) and  splitting Gibbs measures $\mu$. Thus, in particular, the extremal Gibbs measures $\mu^{\pm}_{\beta,\bar{h}}$ are associated to the boundary fields $\mathbf{\tilde{b}^{\pm}}=\{\tilde{b}^{\pm}_n,\ n\ge 1\}$.

The homogeneous splitting Gibbs measures for the Ising model in homogeneous fields, i.e., $h_n=h$ for all $n\ge 1$, are very well known, see e.g. \cite{Ge} and \cite{Rozi1}. The translation-invariant solutions $(b_n)_{n\ge 1}$ to the recurrence equation (\ref{recurrence}) when $h_n=h$ for all $n\ge 1$ are constant, i.e., $b_n=b^*$ for all $n\ge 1$. Thus, we have the equation 
\begin{equation}\label{fixed_point}
b^*=h+d F(b^*,\theta):=\psi(b^*),
\end{equation}
where $\psi(x)=h+dF(x,\theta)$.

By \cite{Ge}, \cite{Rozi1} we know that there exists $\beta_c(d)>0$ and $h_c(\beta,d)>0$ such that:
\begin{enumerate}
\item[(1)] If $\beta\le \beta_c(d)$ or $|h|> h_c(\beta,d)$, the function $\psi$ has exactly one fixed point. We define the solution as the sequence $\mathbf{b^{\sharp}}=\{b^{\sharp}_n\}_{n\ge 1}$ such that $b^{\sharp}_n=b^{\sharp}$ is constant.
\item[(2)] If $\beta>\beta_c(d)$ and $|h|< h_c(\beta,d)$, the function $\psi$ has exactly three fixed points. The solutions are the sequences $\mathbf{b^{\sharp}}$, $\mathbf{b^+}=\{b^+_n\}_{n\ge 1}$ and $\mathbf{b^-}=\{b^-_n\}_{n\ge 1}$ in which are constant $b^+_n=b^+$ and $b^-_n=b^-$ for all $n\ge 1$.
\item[(3)] If $\beta>\beta_c(d)$ and $|h|= h_c(\beta,d)$, the function $\psi$ has exactly two fixed points. For $h=h_c$ the sequences $\mathbf{b^{\sharp}}$ and $\mathbf{b^-}$ coincide, and for $h=-h_c$ the sequences $\mathbf{b^{\sharp}}$ and $\mathbf{b^+}$ coincide.
\end{enumerate}
The sequences $\mathbf{b^+}$ and $\mathbf{b^-}$ are extremal in the sense that, if $\mathbf{b}=(b_n)_{n\ge 1}$ is a solution to (\ref{fixed_point}), then $b^- \le b_n \le b^+$ for all $n\ge 1$.

Note that $b^+$ is a saddle node of $\psi$ for $h=-h_c$, i.e., $\psi'(b^+)=1$; this means that $\mathbf{b^+}$ is (marginally) stable in a minus field. It attracts higher values, but repels lower ones. The ``stable'' $\mathbf{b^-}$ is an attractor of all initial $b$ below $b^+$.
In fact it attracts exponentially fast, due to the map $\psi(b)-b$ near $b^-$ being contracting.

Also, we note that similarly, in a positive critical field, $b^-$ is the saddle node of $\psi$ for $h=h_c$, and thus $\mathbf{b^-}$ is marginally stable and $\mathbf{b^+}$ is the stable attractor in a plus field.

Lets $\psi(\infty):=\lim_{x\to \infty}\psi(x)<\infty$. It is easy to see that the iteration $\psi^n(\infty)$ converges to $b^+$ as $n$ goes to infinity (see \cite{Ge, Rozi1}). Moreover, $\psi^n(\infty)\ge b^+$ for all $n\ge 1$.

\begin{tikzpicture}[xscale=1.2,yscale=1.7, decoration={
    markings,
    mark=at position 0.5 with {\arrow{>}}}]
\draw [help lines, <->]  (-5,0) -- (5,0);
\draw [help lines, <->] (0,-2.5) -- (0,2.5);
\draw [domain=-5:5, semithick, smooth] plot (\x, {-0.3+3*(1/2)*ln((1+1/2)/(1-1/2))*tanh(\x))});
\draw [domain=-2.5:2] plot (\x, {\x});
\node at (2,2.2) {$y=x$};
\node at (5,1.2) {$\psi=-h_c+dF(\cdot,\theta)$};
\draw [dashed] (0.73,0) -- (0.73,0.73);
\node [below] at (0.73,0) {$b^+$};
\draw [dashed] (0,0.7) -- (0.7,0.7);
\node [left] at (0,0.7) {$b^+$};
\draw [dashed] (-1.79,0) -- (-1.79,-1.8);
\node [above] at (-1.79,0) {$b^-$};
\draw [dashed] (0,-1.79) -- (-1.8,-1.79);
\node [right] at (0,-1.79) {$b^-$};
\draw [postaction={decorate}, >=stealth] (5,1.8) -- (1.8,1.8);
\draw [postaction={decorate}, >=stealth] (1.8,1.8) -- (1.8,1.3);
\draw [dashed] (1.8,1.8) -- (0,1.8);
\node [left] at (0,1.8) {$\psi(\infty)$};
\draw [postaction={decorate}, >=stealth] (1.8,1.3) -- (1.3,1.3);
\draw [postaction={decorate}, >=stealth] (1.3,1.3) -- (1.3,1.13);
\draw [dashed] (1.3,1.3) -- (0,1.3);
\node [left] at (0,1.3) {$\psi^2(\infty)$};
\draw [->, >=stealth] (1.3,1.13) -- (1.13,1.13);
\draw [->, >=stealth] (1.13,1.13) -- (1.13,1.05);
\node [left] at (-0.2,1) {$\vdots$};
\node[align=center, below] at (0,-3)%
{Figure 1: The graph of $\psi$ for $h=-h_c$ and the fixed points $b^+$ and $b^-$, \\
and the sequence $\psi^n(\infty)$ converging to $b^+$.};
 \end{tikzpicture}

\section{Results and proofs}

Our analysis is based on the behavior of the sum
$
\sum_{j=1}^n\left( \sum_{i=j}^n \epsilon_i \right)^2
$
for the perturbation of the field $(\epsilon_k)_{k\ge 1}$. Firstly, the following inequality in the next proposition is inspired by the rearrangement inequality and somehow helps to see the behavior of the sum.

\begin{proposition}
For any positive decreasing sequence $(\epsilon_n)_{n\ge 1}$,
$$
\sum_{i=1}^n (i\epsilon_i)^2 \le \sum_{j=1}^n\left( \sum_{i=j}^n \epsilon_i \right)^2 \le \sum_{i=1}^n ((n-i+1)\epsilon_i)^2.
$$
\end{proposition}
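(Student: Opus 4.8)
The plan is to treat the two inequalities separately; both come down to the monotonicity of $(\epsilon_n)$ once the middle quantity is rewritten. Throughout put $S_j:=\sum_{i=j}^n\epsilon_i$, so that the quantity in the middle is $\sum_{j=1}^n S_j^2$. For the \emph{upper bound} I would argue directly: since $\epsilon_i\le\epsilon_j$ whenever $i\ge j$, we have $S_j=\sum_{i=j}^n\epsilon_i\le(n-j+1)\epsilon_j$; squaring and summing over $j=1,\dots,n$ (and renaming the summation index) immediately yields $\sum_{j=1}^n S_j^2\le\sum_{i=1}^n\big((n-i+1)\epsilon_i\big)^2$. This step uses nothing but the decrease of the sequence.

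For the \emph{lower bound} the key is to expand the square and count: for fixed $i,k$ the number of indices $j$ with $1\le j\le i$ and $j\le k$ is exactly $\min(i,k)$, so
$$
\sum_{j=1}^n S_j^2=\sum_{j=1}^n\sum_{i=j}^n\sum_{k=j}^n\epsilon_i\epsilon_k=\sum_{i=1}^n\sum_{k=1}^n\min(i,k)\,\epsilon_i\epsilon_k=\sum_{i=1}^n i\,\epsilon_i^2+2\sum_{1\le i<k\le n} i\,\epsilon_i\epsilon_k .
$$
On the other hand, writing $i^2=i+i(i-1)$ and $i(i-1)=2\sum_{m=1}^{i-1}m$ and interchanging the order of summation gives
$$
\sum_{i=1}^n (i\epsilon_i)^2=\sum_{i=1}^n i\,\epsilon_i^2+2\sum_{1\le i<k\le n} i\,\epsilon_k^2 .
$$
Subtracting, the common term $\sum_i i\epsilon_i^2$ cancels and one is left with
$$
\sum_{j=1}^n S_j^2-\sum_{i=1}^n (i\epsilon_i)^2=2\sum_{1\le i<k\le n} i\,\epsilon_k\,(\epsilon_i-\epsilon_k)\ge 0,
$$
since $\epsilon_i\ge\epsilon_k$ whenever $i<k$ and all remaining factors are positive. (An equally short alternative is induction on $n$: passing from $n-1$ to $n$ adds $n^2\epsilon_n^2$ to the leftmost sum while the middle increases by $2\epsilon_n\sum_{j=1}^{n-1}\sum_{i=j}^{n-1}\epsilon_i+n\epsilon_n^2$, and $\sum_{j=1}^{n-1}\sum_{i=j}^{n-1}\epsilon_i\ge\epsilon_n\sum_{j=1}^{n-1}(n-j)=\tfrac{n(n-1)}{2}\epsilon_n$ by monotonicity, which is exactly what is needed.)

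The computations are elementary, so I do not expect a genuine obstacle; the only place requiring care is the bookkeeping — getting the count $\min(i,k)$ correct in the triple sum, and correctly re-indexing $\sum_i i(i-1)\epsilon_i^2$ as $2\sum_{i<k}i\,\epsilon_k^2$ (the inner index $m\in\{1,\dots,i-1\}$ becomes the smaller of the two indices and the old $i$ becomes the larger). No analytic input beyond the inequality $\epsilon_i\ge\epsilon_k$ for $i<k$ enters anywhere.
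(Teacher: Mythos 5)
Your proof is correct and follows essentially the same route as the paper: the upper bound is the identical termwise estimate $S_j\le(n-j+1)\epsilon_j$, and your $\min(i,k)$ expansion of $\sum_j S_j^2$ is exactly the paper's identity $\sum_i i\epsilon_i^2+2\sum_{i=2}^n\epsilon_i\sum_{j=1}^{i-1}j\epsilon_j$, with monotonicity applied to the cross terms in the same way. Nothing to fix.
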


\begin{proof}
For the upper bound, using that $\epsilon_i\le \epsilon_j$ for any $i\ge j$, we have
\begin{equation}
\sum_{j=1}^n\left( \sum_{i=j}^n \epsilon_i \right)^2 \le \sum_{j=1}^n\left( \sum_{i=j}^n \epsilon_j \right)^2=\sum_{i=1}^n ((n-i+1)\epsilon_i)^2.
\end{equation}
For the lower bound, we use the following expression,
\begin{equation}
\sum_{j=1}^n\left( \sum_{i=j}^n \epsilon_i \right)^2 = \sum_{i=1}^n i\epsilon^2_i +2\sum_{i=2}^n \epsilon_i \left( \sum_{j=1}^{i-1}j\epsilon_j \right).
\end{equation}
Thus,
\begin{equation}
\sum_{j=1}^n\left( \sum_{i=j}^n \epsilon_i \right)^2
\ge \sum_{i=1}^n i\epsilon^2_i +2\sum_{i=2}^n \epsilon_i^2\left( \sum_{j=1}^{i-1}j \right)=\sum_{i=1}^n (i \epsilon_i)^2,
\end{equation}
as we wanted.
\end{proof}

The results will be based on estimate of the influence from infinity on boundary fields near the origin. If the influence due to the inhomogeneous fields is small enough, the non-uniqueness of Gibbs measures will not change, if the influence of the inhomogeneous terms gets too big, then we will have a unique Gibbs measure.

The external fields that we are working with are of the form $h_n=-h_c-\epsilon_n$. Consider the function $\tilde{\psi}_n(x):=h_{n}+dF(x,\theta)=\psi(x)-\epsilon_n$. For each $n\ge 1$, define $\tilde{\psi}_{n,n}(x)=\tilde{\psi}_n(x)$ and $\tilde{\psi}_{k,n}(x)=\tilde{\psi}_{k}(\tilde{\psi}_{k+1,n}(x))$ for $k< n$.
Note that, for any $k\ge 1$,
\begin{equation}\label{limit_plus}
\tilde{b}^+_k = \lim_{n\to \infty}\tilde{\psi}_{k,n}(\tilde{b}^+_{n+1})\ge \lim_{n\to \infty}\tilde{\psi}_{k,n}(b^+),
\end{equation}
since the sequence $c_k=\lim_{n\to \infty}\tilde{\psi}_{k,n}(b^+)$ is well defined, and satisfies the compatibility recurrence (\ref{recurrence}). Thus, by the extremality of $\tilde{\mathbf{b}}^+$, the inequality (\ref{limit_plus}) holds. By the same argument, for any $k\ge 1$,
\begin{equation}
\tilde{b}^-_k = \lim_{n\to \infty}\tilde{\psi}_{k,n}(\tilde{b}^-_{n+1})\le  \lim_{n\to \infty}\tilde{\psi}_{k,n}(b^-).
\end{equation}

\begin{theorem}\label{phase_transition}
Consider the ferromagnetic Ising model on a Cayley tree $\Gamma^d$ with external fields $(-h_c-\epsilon_n)_{n\ge 1}$. Suppose that the sequence of positive $(\epsilon_n)_{n\ge 1}$ decreases to zero and satisfies the following condition,
\begin{equation}\label{weak}
\lim_{n\to \infty}\sum_{j=1}^n\left( \sum_{i=j}^n \epsilon_i \right)^2<\infty.
\end{equation}
Then the perturbed model undergoes a phase transition.
\end{theorem}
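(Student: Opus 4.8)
The plan is to push the whole problem onto the scalar recursion (\ref{recurrence}) and onto the local geometry of $\psi$ at its saddle node $b^+$. By the discussion after Theorem~\ref{thm_compatibility} it suffices to produce one generation $k$ with $\tilde b^+_k\neq\tilde b^-_k$. The minus side is cheap: since $\tilde\psi_j(x)=\psi(x)-\epsilon_j\le\psi(x)$ and $\psi$ is increasing, induction gives $\tilde\psi_{k,n}(b^-)\le b^-$ for all $n$, whence (the analogue of (\ref{limit_plus}) for $b^-$) $\tilde b^-_k\le b^-$. Using the lower bound $\tilde b^+_k\ge c_k:=\lim_{n}\tilde\psi_{k,n}(b^+)$ from (\ref{limit_plus}), the theorem reduces to showing $c_k>b^-$ for some large $k$; I will in fact aim for the stronger statement $c_k\to b^+$.

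Next I would set $u_k:=b^+-c_k\ge 0$. Letting $n\to\infty$ in (\ref{recurrence}) and using $\psi(b^+)=b^+$ gives the one-dimensional recursion $u_k=\epsilon_k+g(u_{k+1})$ with $g(u):=b^+-\psi(b^+-u)$, where $g(0)=0$ and $g'(0)=\psi'(b^+)=1$. Because $b^+$ is a genuine (non-degenerate) saddle node, $\psi$ is strictly concave there, so $g''(0)=-\psi''(b^+)>0$; together with $\psi(x)<x$ on $(b^-,b^+)$ this yields the two-sided bound $u\le g(u)\le u+C_1u^2$ on $[0,\rho_0]$ for some $\rho_0\in(0,b^+-b^-)$ and $C_1=\sup_{x\in(b^-,b^+)}|\psi(x)-x|/(b^+-x)^2<\infty$. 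The finite-volume deficits $u_k^{(n)}:=b^+-\tilde\psi_{k,n}(b^+)$ vanish at $k=n+1$, increase in $n$ (since $\tilde\psi_{n+1}(b^+)=b^+-\epsilon_{n+1}<b^+$ and $\tilde\psi_{k,n}$ is increasing), and increase to $u_k$. Unrolling $u_j^{(n)}=\epsilon_j+g(u_{j+1}^{(n)})$ from $j=n$ down to $j=k$ and inserting $g(u)\le u+C_1u^2$ repeatedly produces, morally,
\[
u_k^{(n)}\ \le\ \sum_{i=k}^{n}\epsilon_i\ +\ C_1\sum_{j=k}^{n}\Big(\sum_{i=j+1}^{n}\epsilon_i\Big)^{2}\ +\ (\text{lower order}),
\]
so that the leading correction is exactly the quantity controlled by (\ref{weak}).

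The substantive step is to make this rigorous and uniform in $n$. I would fix $k_0$ so large that $\sum_{i\ge k_0}\epsilon_i$ and, above all, $\sum_{j\ge k_0}\big(\sum_{i\ge j}\epsilon_i\big)^{2}$ are as small as desired — possible precisely because (\ref{weak}) holds, which moreover forces $\sum_{i\ge j}\epsilon_i\to0$ — and then prove by downward induction on $j\in\{k_0,\dots,n+1\}$ an estimate $u_j^{(n)}\le\Phi_j^{(n)}$, where $\Phi_j^{(n)}$ is an explicit comparison sequence built from the partial tails $\sum_{i=j}^{n}\epsilon_i$ and the partial sums $\sum_{\ell=j}^{n}\big(\sum_{i=\ell}^{n}\epsilon_i\big)^{2}$ and satisfies a telescoping relation $\Phi_j^{(n)}=\epsilon_j+\Phi_{j+1}^{(n)}+(\text{quadratic in tails})$; the inductive step uses only $g(u)\le u+C_1u^2$ on $[0,\rho_0]$ and closes once $k_0$ is large enough, since the terms that could spoil it are quadratic in the (now tiny) tail of $\sum_j\big(\sum_i\epsilon_i\big)^2$. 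Letting $n\to\infty$ then gives $u_{k_0}=\sup_n u_{k_0}^{(n)}\le\sum_{i\ge k_0}\epsilon_i+o(1)$ with $o(1)\to0$ as $k_0\to\infty$, hence $u_k\to0$, i.e. $c_k\to b^+$. With $\tilde b^+_k\ge c_k$ and $\tilde b^-_k\le b^-<b^+$ this yields $\tilde b^+\neq\tilde b^-$, and by Theorem~\ref{thm_compatibility} the extremal Gibbs measures $\mu^{+}_{\beta,\bar{h}}$ and $\mu^{-}_{\beta,\bar{h}}$ are distinct — a phase transition.

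The main obstacle is exactly this uniform bootstrap. In the summable regime (\ref{sum2}) crude geometric estimates suffice because $d^n\epsilon_n$ decays exponentially; here the linearization at $b^+$ is neutral ($\psi'(b^+)=1$), so a perturbation is only damped polynomially and the quadratic defects $C_1u_{j+1}^2$ genuinely accumulate over $\Theta(n)$ generations. Keeping this accumulation bounded as $n\to\infty$ is what forces one to measure the perturbation in the $\ell^2$-type norm $\sum_j\big(\sum_{i=j}^{n}\epsilon_i\big)^{2}$ rather than in $\ell^1$ or $\ell^\infty$, and to choose the comparison sequence $\Phi$ so that its telescoping absorbs the cross terms produced when $g(u)\le u+C_1u^2$ is squared and summed. (The Proposition above, sandwiching this sum between $\sum_i(i\epsilon_i)^2$ and $\sum_i((n-i+1)\epsilon_i)^2$, is what makes (\ref{weak}) easy to check and shows it is genuinely weaker than (\ref{sum2}).) Implicit throughout is that $\beta>\beta_c(d)$, which is what guarantees, at the critical field, that $b^+\neq b^-$ and that the constants $C_1,\rho_0$ above are finite and positive.
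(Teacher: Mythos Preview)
Your proposal is correct and follows essentially the same route as the paper: both reduce to the scalar recursion, observe $\tilde b^-_k\le b^-$, and then control $b^+-\tilde\psi_{k,n}(b^+)$ by exploiting $\psi'(b^+)=1$, $\psi''(b^+)<0$ to identify the leading quadratic correction $\tfrac12|\psi''(b^+)|\sum_{j}\big(\sum_{i\ge j}\epsilon_i\big)^2$, which condition~(\ref{weak}) then makes small for large $k$. Your change of variables $u=b^+-x$, $g(u)=b^+-\psi(b^+-u)$ and the proposed downward-induction comparison sequence $\Phi_j^{(n)}$ are a cleaner and more rigorous packaging of what the paper derives formally via iterated Taylor expansion (its equation~(\ref{psi_iteracted}) with an $O(\epsilon_{k+1})^3$ remainder whose uniformity in $n$ is left implicit), but the underlying mechanism and the role of~(\ref{weak}) are identical.
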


\begin{proof}
Suppose that $\theta>1/d$, which implies that we have phase transition and let $b^-<b^+$ be the solutions of the equation (\ref{fixed_point}). Remind that $b^+$ is a saddle node, i.e., $\psi'(b^+)=1$.

By Taylor expansion on $\psi$, we have
\begin{equation}
\psi(b^{+}-\epsilon_n)=\psi(b^+)-\psi'(b^+)\epsilon_n+\frac{1}{2}\psi''(b^+)\epsilon_n^2+O(\epsilon_n)^3.
\end{equation}
Using the fact that $\psi$ is a concave function around $b^+$, we get
$$
\psi(b^{+}-\epsilon_n)=b^{+} -\epsilon_n -\frac{1}{2}|\psi''(b^+)|\epsilon_n^2+O(\epsilon_n)^3.
$$
Now, if we apply $\psi$ on  $\psi(b^{+}-\epsilon_n)-\epsilon_{n-1}$, we have
$$
\begin{aligned}
&\psi(\psi(b^{+}-\epsilon_n)-\epsilon_{n-1})=\psi(b^{+} - \epsilon_n-\epsilon_{n-1} -\frac{1}{2}|\psi''(b^+)|\epsilon_n^2+O(\epsilon_n)^3)\\
&=b^{+}-(\epsilon_n+\epsilon_{n-1}) -\frac{1}{2}|\psi''(b^+)|\epsilon_n^2-\frac{1}{2}|\psi''(b^+)|(\epsilon_n+\epsilon_{n-1} +\frac{1}{2}|\psi''(b^+)|\epsilon_n^2)^2+
O(\epsilon_{n-1})^3\\
&=b^{+}-(\epsilon_n+\epsilon_{n-1}) -\frac{1}{2}|\psi''(b^+)|\left(\epsilon_n^2+(\epsilon_n+\epsilon_{n-1})^2\right)+O(\epsilon_{n-1})^3.\\
\end{aligned}
$$
Thus, by induction, we obtain our main formula: 
\begin{equation}\label{psi_iteracted}
\tilde{\psi}_{k,n}(b^{+})=b^{+}- \sum_{i=k}^n \epsilon_i-\frac{1}{2}|\psi''(b^+)|\sum_{i=k+1}^n\left(\sum_{j=i}^n\epsilon_j \right)^2+O(\epsilon_{k+1})^3.
\end{equation}
Since the sequence $(\epsilon_n)_{n\ge 1}$ satisfies (\ref{weak}), for any $\varepsilon>0$, there exists $k_1\ge 1$ such that, for all $k\ge k_1$,
\begin{equation}
\sum_{j=k+1}^n\left( \sum_{i=j}^n \epsilon_i \right)^2<\frac{\varepsilon}{|\psi''(b^+)|},
\end{equation}
for every $n\ge k$. Moreover, $(\epsilon_n)_{n\ge 1}$ is summable. Thus, there exists $k_2\ge 1$ such that, for all $k\ge k_2$,
\begin{equation}\label{summable}
\sum_{i=k}^n \epsilon_i < \frac{\varepsilon}{2},
\end{equation}
for every $n\ge k$. Thus, for all $k\ge \max\{k_1,k_2\}$,
\begin{equation}\label{small}
\tilde{\psi}_{k,n}(b^{+})>b^{+}- \varepsilon +O(\epsilon_{k+1})^3.
\end{equation}
for every $n\ge k$. Let's take $\varepsilon>0$ sufficiently small such that $b^+ - \varepsilon >b^-$.
From (\ref{psi_iteracted}), there exists $k_0\ge \max\{k_1,k_2\}$ and $\delta=\delta(k)>0$  such that, for all $k\ge k_0$,
\begin{equation}
\tilde{\psi}_{k,n}(b^+) > b^{+}-\delta,
\end{equation}
for every $n> k$. Thus, by (\ref{limit_plus}), $\tilde{b}^+_{k}> b^+ - \delta>b^-$ for all $k\ge k_0$. Note also that $b^-\ge \tilde{b}^-_{k}$, as $b^-$ is even stable for perturbation under homogeneous fields, so much the  more it is for $\mathbf{\tilde{b}^-}$. Therefore $\tilde{b}_{k}^+>\tilde{b}_{k}^-$ for all $k\ge k_0$. Since $\tilde{b}^+$ and $\tilde{b}^-$ are associated to extremal Gibbs measures, namely the Gibbs measures with plus and minus-boundary condition, these measures are distinct.
\end{proof}

It is easy to see that the above result also works when we will consider the stability of the minus state in a plus field, under addition of a positive spatially dependent perturbation $(\epsilon_n)_{n\ge 1}$. In that case, we consider the ferromagnetic Ising model on a Cayley tree $\Gamma^d$ with external fields $(h_c+\epsilon_n)_{n\ge 1}$.

\smallskip

Our second result says that if the inhomogeneous field is negative and such that it does {\em not} satisfy the above condition, it is strong enough to remove the phase transition, and indeed there will be one single Gibbs measure.

\begin{theorem}\label{uniqueness}
Consider the ferromagnetic Ising model on a Cayley tree $\Gamma^d$ with external fields $(-h_c-\epsilon_n)_{n\ge 1}$. Suppose that the sequence of positive $(\epsilon_n)_{n\ge 1}$ decreases to zero and satisfies the following condition,
\begin{equation}\label{strong}
\lim_{n\to \infty}\sum_{j=1}^n\left( \sum_{i=j}^n \epsilon_i \right)^2=\infty.
\end{equation}
Then the perturbed model has uniqueness for any temperature.
\end{theorem}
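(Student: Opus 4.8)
The plan is to work, as usual, with the boundary-field recursion (\ref{recurrence}) and to show that its two extremal solutions coincide, $\tilde{\mathbf b}^{+}=\tilde{\mathbf b}^{-}$; this is equivalent to $\mu_{\beta,\bar h}^{+}=\mu_{\beta,\bar h}^{-}$ and hence to uniqueness of the Gibbs measure. We may assume $\theta>1/d$ (for $\theta\le 1/d$, i.e.\ $\beta\le\beta_{c}(d)$, uniqueness holds for all external fields and there is nothing to prove). Recall then that $\psi$ has the two fixed points $b^{-}<b^{+}$, that $\psi'(b^{+})=1$ and $\psi$ is concave near $b^{+}$, and that -- since $\psi'=dF'(\cdot,\theta)$ is even about $0$ and strictly decreasing in $|x|$ -- one has $b^{+}>0$, $b^{-}<-b^{+}$, and $\psi'(x)<1$ for every $x<-b^{+}$. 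The core of the argument is to show that the divergence condition (\ref{strong}) pushes the plus-boundary fields past the saddle, that is $\tilde b_{k}^{+}<-b^{+}$ for \emph{every} $k$; granting this, a one-line contraction finishes the proof.

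For the core I would first note that any solution $\mathbf b$ of (\ref{recurrence}) satisfies $b_{m}<\psi(b_{m+1})$ for every $m$ (as $\epsilon_{m}>0$), hence by iteration $b_{k}<\psi^{j}(\infty)$ for every $j\ge 1$, so $b_{k}\le\lim_{j}\psi^{j}(\infty)=b^{+}$; in particular $\tilde b_{n+1}^{+}\le b^{+}$, and since $\tilde b_{k}^{+}=\tilde\psi_{k,n}(\tilde b_{n+1}^{+})$ for every $n$ it follows that $\tilde b_{k}^{+}\le\tilde\psi_{k,n}(b^{+})$ for all $n\ge k$. Fix $k$ and follow the backward orbit $m\mapsto\tilde\psi_{m,n}(b^{+})$, which starts at $\tilde\psi_{n,n}(b^{+})=b^{+}-\epsilon_{n}$, close to $b^{+}$ for large $n$. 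Put $u_{m}:=b^{+}-\tilde\psi_{m,n}(b^{+})$. As long as the orbit stays in $[b^{+}-\eta,b^{+}]$ (with $\eta>0$ small and fixed), the concavity of $\psi$ at $b^{+}$ gives -- precisely the Taylor computation behind (\ref{psi_iteracted}) -- $u_{m-1}\ge u_{m}+\tfrac{\kappa}{2}u_{m}^{2}+\epsilon_{m-1}$, with $\kappa:=\min_{[b^{+}-\eta,b^{+}]}|\psi''|>0$; telescoping, and using the bound $u_{m}\ge\sum_{i=m}^{n}\epsilon_{i}$ that this estimate also yields, gives
$$
u_{k}\ \ge\ \frac{\kappa}{2}\sum_{m=k+1}^{n}\Big(\sum_{i=m}^{n}\epsilon_{i}\Big)^{2}.
$$
By (\ref{strong}) the right-hand side tends to $+\infty$ with $n$ (it differs from $\sum_{m=1}^{n}(\sum_{i=m}^{n}\epsilon_{i})^{2}$ by a quantity bounded in $n$ when $(\epsilon_{i})$ is summable, and $u_{k}\ge\sum_{i=k}^{n}\epsilon_{i}\to\infty$ already when it is not), whereas $u_{k}\le\eta$ throughout the orbit's stay in $[b^{+}-\eta,b^{+}]$. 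This contradiction for $n$ large forces the orbit out of $[b^{+}-\eta,b^{+}]$ at some index $m^{*}(n)$; running the same estimate from an arbitrary fixed level in place of $k$ shows $m^{*}(n)\to\infty$. Beyond $m^{*}(n)$ the orbit never returns above $b^{+}-\eta$, so $\tilde\psi_{k,n}(b^{+})$ equals $\tilde\psi_{k,m^{*}(n)-1}$ applied to a point of $[A,b^{+}-\eta]$, where $A:=-h_{c}-d\arctanh\theta-\epsilon_{1}$ is a uniform lower bound for all solutions of (\ref{recurrence}) and for all the backward iterates appearing above. Since $\tilde\psi_{m}\le\psi$ pointwise, while $\psi$ maps $[A,b^{+}-\eta]$ strictly into itself with unique fixed point $b^{-}$, the iterates $\psi^{j}$ converge to $b^{-}$ uniformly on $[A,b^{+}-\eta]$ (the nested intervals $\psi^{j}([A,b^{+}-\eta])$ decrease to $\{b^{-}\}$); as $m^{*}(n)-k\to\infty$ this gives $\tilde\psi_{k,n}(b^{+})<b^{-}+\delta$ for $n$ large, for any prescribed $\delta>0$. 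Taking $\delta$ with $b^{-}+\delta<-b^{+}$ (possible since $b^{-}<-b^{+}$) we conclude $\tilde b_{k}^{+}\le-b^{+}-\gamma$ for a fixed $\gamma>0$ and every $k$.

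To finish, the previous step gives $\tilde b_{k}^{-}\le\tilde b_{k}^{+}\le-b^{+}-\gamma$ and $\tilde b_{k}^{-}\ge A$, so both $\tilde b_{k}^{\pm}$ lie in the fixed compact interval $[A,-b^{+}-\gamma]\subset(-\infty,-b^{+})$, on which $\rho:=\max\psi'<1$. Subtracting the two instances of (\ref{recurrence}) and invoking the mean value theorem, $g_{k}:=\tilde b_{k}^{+}-\tilde b_{k}^{-}=\psi'(\xi_{k+1})\,g_{k+1}\le\rho\,g_{k+1}$, hence $0\le g_{k}\le\rho^{N-k}g_{N}$ for every $N>k$; letting $N\to\infty$ with $g_{N}$ bounded forces $g_{k}=0$. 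Therefore $\tilde{\mathbf b}^{+}=\tilde{\mathbf b}^{-}$, the plus- and minus-boundary-condition Gibbs measures coincide, and the model has a unique Gibbs measure at every temperature.

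The delicate step is clearly the escape past the saddle: there $\psi'(b^{+})=1$, so the backward orbit only creeps, and one must show that the accumulated effect of the small perturbations $\epsilon_{i}$ -- exactly the quadratic sum in (\ref{strong}) -- is still enough to eject it from every neighbourhood of $b^{+}$, while arranging the finite-$n$ bookkeeping so that "$n$ large'' suffices to bound $\tilde b_{k}^{+}$ for each fixed $k$. By comparison, the absorption below $b^{+}-\eta$, the uniform attraction of $\psi$ toward $b^{-}$, and the final geometric contraction on $(-\infty,-b^{+})$ are all soft.
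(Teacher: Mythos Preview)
Your proof is correct and follows the same strategy as the paper: use the second-order behaviour of $\psi$ at the saddle $b^{+}$ together with the divergence condition (\ref{strong}) to drive the backward orbit $\tilde\psi_{k,n}(b^{+})$ out of every neighbourhood of $b^{+}$, and then use a contraction toward $b^{-}$ to conclude $\tilde{\mathbf b}^{+}=\tilde{\mathbf b}^{-}$. Your version is in fact more explicit on one point the paper leaves terse---you push the orbit all the way below $-b^{+}$ (noting $b^{-}<-b^{+}$ from the evenness of $\psi'$) before invoking the Mean Value Theorem, which is exactly the region where $\psi'<1$ is guaranteed.
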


\begin{proof}
For $1\le k<n<N$, let us consider the auxiliary boundary fields $(b^{+,k,n,N}_m)_{1\le m\le N}$ in which $b^{+,k,n,N}_{N}=+\infty$, and $b^{+,k,n,N}_{m-1}=\psi(b^{+,k,n,N}_{m})$ for $n<m\le N$, and $b^{+,k,n,N}_{m-1}=\tilde{\psi}_{m-1}(b^{+,k,n,N}_{m})$ for $k<m\le n$, and $b^{+,k,n,N}_{m}$ satisfies the compatibility equation (\ref{compatibility}) for $m\le k$. This sequence means that we are taking plus boundary condition at distance $N$ from the origin, between $n$ and $N$ the sequence is in the homogeneous case, and between $k$ and $n$ the sequence is in the inhomogeneous case. Note that this provides us with an upper bound for $\mathbf{\tilde{b}^+}$.

\begin{center}
\begin{tikzpicture}
\node at (0,0) {$\cdot$};
\node [left] at (0,0) {$O$};
\node [above] at (1.2,0) {$k$};
\node [above] at (2.7,0) {$n$};
\node [above] at (3.7,0) {$N$};
\node [below] at (3.8,0) {\tiny{$+\infty$}};
\node [below] at (3,0) {\tiny{hom.}};
\node [below] at (1.7,0) {\tiny{non-hom.}};
\node [below] at (0.4,0) {\tiny{compat.}};
\node [below] at (0,-1.3) {$-h_c-\epsilon_i$};
\node [below] at (0,-2.7) {$-h_c$};
\tkzDefPoint(0,0){O}
\tkzDefPoint(0,1){A1}
\tkzDefPoint(0,2.5){A2}
\tkzDefPoint(0,3.5){A3}
\tkzDrawCircle(O,A1)
\tkzDrawCircle(O,A2)
\tkzDrawCircle(O,A3)
\draw [help lines, <->, black]  (0.05,0) -- (0.95,0);
\draw [help lines, <->, black]  (1.05,0) -- (2.45,0);
\draw [help lines, <->, black]  (2.55,0) -- (3.45,0);
\node[align=center, below] at (0,-4)%
{Figure 2: The Cayley tree with root $O$, and the auxiliary boundary fields.\\ 
The circle means the depth of the tree.};
\end{tikzpicture}
\end{center}

In order for all Taylor expansions in (\ref{psi_iteracted}) to hold, for a fixed $\varepsilon>0$, there exist $N\ge 1$ sufficiently large and $n\le N$ such that $b^+ < b^{+,k,n,N}_n < b^+ + \varepsilon$, since $\psi^p(\infty)$ converges to $b^+$ as $p\to \infty$. Note that, for a larger $N$, we can increase $n$ as well. Thus, taking the limit in $N$, we have $\lim_{N\to \infty}b^{+,k,n,N}_n=b^+$. So, we can consider the sequence $(b^{+,k,n}_m)_{m\ge 1}$ defined on the whole Cayley tree by $b^{+,k,n}_m=b^+$ for $m\ge n$, and $b^{+,k,n}_{m-1}=\tilde{\psi}_{m-1}(b^{+,k,n}_{m})$ for $1<m\le n$. For a fixed $0<\delta<b^+$, there exists $k\le n$ such that
\begin{equation}\label{divergence}
\sum_{j=k+1}^n\left( \sum_{i=j}^n \epsilon_i \right)^2 > \frac{2\delta}{|\psi''(b^+)|}.
\end{equation}
By (\ref{psi_iteracted}),
\begin{equation}
\tilde{\psi}_{k,n}(b^+)=b^+ - \sum_{i=k}^n \epsilon_i - \frac{1}{2}|\psi''(b^+)|\sum_{j=k+1}^n\left( \sum_{i=j}^n \epsilon_i \right)^2 +O(\epsilon_{k+1})^3,
\end{equation}
we thus have $\tilde{\psi}_{k,n}(b^+) < b^+ -\delta +O(\epsilon_{k+1})^3 +O(\varepsilon)$. Let us choose $k$ very large (and so $n$) such that $\epsilon_k>0$ is small enough, in which case there exists $0<\delta'<b^+$ satisfying
\begin{equation}\label{b}
b^{+,k,n}_k< b^+ -\delta'.
\end{equation}
Note that $\delta'$ does not depend on $k$ in the sense that $\delta'$ does not change when $k$ increases, once $n$ is large enough, satisfying (\ref{divergence}). The bound (\ref{b}) means that the perturbation of the external fields $(\epsilon_k)_{k\ge 1}$ is strong enough in the sense of condition (\ref{strong}) for the boundary field $b^{+,k,n}_m$ be strictly below $b^+$ for every $m\le k$. Moreover, we take $k$ sufficiently large so that there exists $\delta''<1$ such that $0<\psi'(b^{+,k,n}_m)<1-\delta''$ for $m\le k$. Thus, for every $k_0\ge 1$, there exist $k>k_0$ and $n_k>k$ such that (\ref{b}) holds. Note that $\tilde{b}^+_{k_0}=\lim_{k\to \infty}b^{+,k,n_k}_{k_0}$.

Now, define the boundary fields $(b^{-,n,N}_m)_{1\le m\le N}$ in which $b^{-,n,N}_{N}=-\infty$, $b^{-,n,N}_{m-1}=\psi_{m-1}(b^{-,n,N}_{m})$ for $n< m\le N$ and  $b^{-,n,N}_{m-1}=\tilde{\psi}_{m-1}(b^{-,n,N}_{m})$ for $1\le m\le  n$. For a fixed $n<N$, taking $N$ to infinity, we have $\lim_{N\to \infty}b^{-,N}_n =b^+$. Thus, we can consider the sequence $(b^{-,n}_m)_{m\ge 1}$ defined in the whole Cayley tree so that $b^{-,n}_m=b^-$ for all $m\ge n$, and $b^{-,n}_{m-1}=\tilde{\psi}_{m-1}(b^{-,n}_{m})$ for $1\le m\le  n$.
Note that $\tilde{b}^-_{k_0}=\lim_{n\to \infty}b^{-,n}_{k_0}=\lim_{n\to \infty}\tilde{\psi}_{k_0,n-1}(b^-)$ for all $k_0\ge 1$. The Taylor expansion below shows that these boundary fields decay exponentially by contraction, since $0<\psi'(b^-)<1$. In fact,
\begin{equation}\label{psi_iteracted_minus2}
\tilde{\psi}_{k,n-1}(b^{-})=b^{-}- \sum_{i=k}^{n-1} \psi'(b^-)^{i-k}\epsilon_i +O(\epsilon_{k+1})^2.
\end{equation}

Since the map $\psi$, and similarly $\tilde{\psi}_m$, act as contractions on the interval $(- \infty, b^{+}-\delta')$, with a uniform contraction bound $1- \delta''$, we have that in the limit $k$ to infinity the difference in influence on the boundary field at sites $k_0$ between negative boundary fields and positive boundary fields less than $b^{+}-\delta'$ at sites at distance $k_0$ disappears. In fact, we know that the sequence $b^{-,n}_{k}$ converges to $b^-$ as $k$ is going to infinity (and so $n$ goes to infinity), since $b^{-,n_k}_k \le b^{+,k,n_k}_{k}\le b^+-\delta'$, we have that $| b^{+,k,n_k}_{k} - b^{-,n_k}_k |<C$ for some $C>0$. Note that $C$ does not depend on $k$. Thus, by the Mean Value Theorem,
\begin{equation}
\begin{aligned}
\tilde{b}^+_{k_0}-\tilde{b}^-_{k_0}&= \lim_{k\to \infty}b^{+,k,n_k}_{k_0} - \lim_{n\to \infty}b^{-,n}_{k_0} \\
&= \lim_{k\to \infty} \tilde{\psi}_{k_0,k-1}(b^{+,k,n_k}_{k}) - \lim_{k\to \infty}\tilde{\psi}_{k_0,k-1}(b^{-,n_k}_k) \\
&=\lim_{k\to \infty} \left( \tilde{\psi}_{k_0,k-1}(b^{+,k,n_k}_{k}) - \tilde{\psi}_{k_0,k-1}(b^{-,n_k}_k) \right) \\
&\le \lim_{k\to \infty} \left( \sup_{c\in [b^{-,n_k}_k,b^{+,k,n_k}_{k}]}\tilde{\psi}'_{k_0,k-1}(c) | b^{+,k,n_k}_{k} - b^{-,n_k}_k | \right) \\
&\le C\lim_{k\to \infty}(1-\delta'')^{k-k_0}\\
&=0.
\end{aligned}
\end{equation}
Thus, the sequences $\mathbf{\tilde{b}^+}$ and $\mathbf{\tilde{b}^-}$ are equal. Therefore the extremal Gibbs measures associated to these sequences, $\mu^+_{\beta,\bar{h}}$ and $\mu^-_{\beta,\bar{h}}$ respectively, are equal for any $\beta>0$.
\end{proof}

\textbf{Example. }As in \cite{BCCP, CV}, let us consider the sequence $\epsilon_k=1/k^{\gamma}$, where $\gamma>0$. Note that
\begin{equation}\label{gamma}
\sum_{j=1}^n\left( \sum_{i=j}^n i^{-\gamma} \right)^2 = \sum_{j=1}^n O(j^{2-2\gamma}).
\end{equation}
Thus, the sum (\ref{gamma}) converges when $2-2\gamma<-1$ and diverges when $2-2\gamma\ge-1$, and we conclude that the critical power is $\gamma_c=3/2$. Note also that the model has uniqueness at the critical power, since the sum diverges.

\section*{Acknowledgments}

A.C.D.van Enter thanks Henk Broer for a helpful conversation. We thank Bruno Kimura and Wioletta Ruszel for discussions and for the support and hospitality in Delft, making this collaboration possible. R. Bissacot thanks Wioletta especially  for organising the invitation, support and all the effort for his first visit in the Netherlands. We thanks NWO-STAR and FAPESP for support.

\end{document}